\documentclass[a4paper,12pt]{article}
\usepackage[totalwidth=500pt,totalheight=680pt]{geometry}
\usepackage{graphicx}
\usepackage{color}
\usepackage{amsmath}
\usepackage{amssymb}
\usepackage{mathrsfs}         
\usepackage[all]{xy}

\usepackage{amsthm}

\usepackage{url}

 
\newtheorem{theorem}{Theorem} 
\newtheorem*{theorem*}{Theorem} 
\newtheorem{corollary}{Corollary} 
\newtheorem{proposition}{Proposition}

\theoremstyle{remark}

\bibliographystyle{plain}

\newcommand{\W}{\textsc{W}}
\newcommand{\WSAT}{\textsc{W[SAT]}}

\newcommand{\FPT}{\textsc{FPT}}

\title{Parameterized Resolution with bounded conjunction}

\author{
Stefan Dantchev and Barnaby Martin\thanks{Supported by EPSRC grant EP/G020604/1.} \\ 
Engineering and Computing Sciences, Durham University, U.K.
}

\begin{document}

\maketitle

\begin{abstract}
We provide separations between the parameterized versions of Res$(1)$ (Resolution) and Res$(2)$. Using a different set of parameterized contradictions, we also separate the parameterized versions of Res$^*(1)$ (tree-Resolution) and Res$^*(2)$.
 \end{abstract}

\section{Introduction}

In a series of papers \cite{FOCS2007,Galesietal,BGL-SAT,BGLR} a program of \emph{parameterized proof complexity} is initiated and various lower bounds and classifications are extracted. The program generally aims to gain evidence that $\W[2]$ is different from $\FPT$ (though in the journal version \cite{FOCS2007journal} of \cite{FOCS2007} the former becomes $\WSAT$, and in the note \cite{PPCw1} $\W[1]$ is entertained). 
Parameterized proof (in fact, refutation) systems aim at refuting \emph{parameterized contradictions} which are pairs $(\mathcal{F},k)$ in which $\mathcal{F}$ is a propositional CNF with no satisfying assingment of weight $\leq k$. Several parameterized (hereafter often abbreviated as ``p-'') proof systems are discussed in \cite{FOCS2007,Galesietal,BGLR}. The lower bounds in  \cite{FOCS2007,Galesietal} and \cite{BGLR} amount to proving that the systems p-tree-Resolution, p-Resolution and p-bounded-depth Frege are not \emph{fpt-bounded}. Indeed, this is witnessed by the \emph{Pigeonhole principle}, and so holds even when one considers parameterized contradictions $(\mathcal{F},k)$ where $\mathcal{F}$ is itself an actual contradiction. Such parameterized contradictions are termed ``\emph{strong}'' in \cite{BGLR}, in which the authors suggest these are the only parameterized contradictions that should be considered, as general lower bounds -- even in p-bounded-depth Frege -- are trivial (see \cite{BGLR}). We sympathise with this outlook, but remind that there are alternative parameterized proof systems built from embedding (see \cite{FOCS2007,FOCS2007journal}) for which no good lower bounds are known even for general parameterized contradictions.

Kraj\'{\i}\v{c}ek introduced the system Res$(j)$ of Resolution-with-bounded-conjunction in \cite{WPHP-Krajicek}. The tree-like variant of this system is normally denoted Res$^*(j)$. Res$(j+1)$ incorporates Res$(j)$ and is ostensibly more powerful. This was demonstrated first for Res$(1)$ and Res$(2)$ in \cite{AtseriasBE02}, where a quasi-polynomial separation was given. This was improved in \cite{AutomatizabilityResolution}, until an exponential separation was given in \cite{Switching_small_restrictions}, together with like separations for Res$(j)$ and Res$(j+1)$, for $j >1$. Similar separations of Res$^*(j)$ and Res$^*(j+1)$ were given in \cite{Tree-like_Res(k)}. We are motivated mainly by the simplified and improved bounds of \cite{Rel-sep}, which use relativisations of the \emph{Least number principle}, LNP$_n$ and an ordered variant thereof, the \emph{Induction principle}, IP$_n$. The contradiction LNP$_n$ asserts that a partial $n$-order has no minimal element. In the literature it enjoys a myriad of alternative names: the  \emph{Graph Ordering Principle} GOP, \emph{Ordering Principle} OP and \emph{Minimal Element Principle} MEP. Where the order is total it is also known as TLNP and GT. The contradiction IP$_n$ uses the built-in order of $\{1,\ldots,n\}$ and asserts that: $1$ has property $P$, $n$ fails to have property $P$, and any number having property $P$ entails a larger number also having property $P$. Relativisation of these involves asserting that everything holds only on some non-empty subset of the domain (in the case of IP$_n$ we force $1$ and $n$ to be in this relativising subset).

In the world of parameterized proof complexity, we already have lower bounds for $\mathrm{p\mbox{-}Res}(j)$ (as we have for p-bounded-depth Frege), but we are still interested in separating levels $\mathrm{p\mbox{-}Res}(j)$. We are again able to use the \emph{relativised least number principle}, RLNP$_n$ to separate p-Res$(1)$ and p-Res$(2)$. Specifically, we prove that $(\mathrm{RLNP}_n,k)$ admits a polynomial-sized in $n$ refutation in Res$(2)$, but all p-Res$(1)$ refutations of $(\mathrm{RLNP}_n,k)$ are of size $\geq n^{\sqrt{k/4}}$. Although we use the same principle as \cite{Rel-sep}, the proof given there does not adapt to the parameterized world, and instead we look for inspiration to the proof given in \cite{BGLR} for the \emph{Pigeonhole principle}. For tree-Resolution, the situation is more complicated. The \emph{Relativised induction principle} RIP$_n$ of IP$_n$ admits fpt-bounded proofs in Res$^*(1)$, indeed of size $O(k!)$, therefore we are forced to alter this principle. Thus we come up with the \emph{Relativised vectorised induction principle} RVIP$_n$. We are able to show that $(\mathrm{RVIP}_n,k)$ admits $O(n^4)$ refutations in Res$^*(2)$, while every refutation in Res$^*(1)$ is of size $\geq n^{k/16}$. Note that both of our parameterized contradictions are ``strong'', in the sense of \cite{BGLR}. We go on to give extended versions of RVIP$_n$ and explain how they separate p-Res$^*(j)$ from p-Res$^*(j+1)$, for $j>1$.

This paper is organised as follows. After the preliminaries, we give our separations of p-Res$^*(j)$ from p-Res$^*(j+1)$ in Section~\ref{sec:tree-res} and our separation of p-Res$(1)$ from p-Res$(2)$ in Section~\ref{sec:res}. We then conclude with some remarks and open questions.

\section{Preliminaries}

A \emph{parameterized language} is a language $L\subseteq \Sigma^* \times \mathbb{N}$; in an instance $(x,k) \in L$, we refer to $k$ as the \emph{parameter}. A parameterized language is \emph{fixed-parameter tractable} (fpt - and in \FPT) if membership in $L$ can be decided in time $f(k).|x|^{O(1)}$ for some computable function $f$. If FPT is the parameterized analog of P, then (at least) an infinite chain of classes vye for the honour to be the analog of NP. The so-called W-hierarchy sit thus: $\FPT \subseteq \W[1] \subseteq \W[2] \subseteq \ldots \subseteq \WSAT$. For more on parameterized complexity and its theory of completeness, we refer the reader to the monographs \cite{DowneyFellows,FlumGrohe}. Recall that the \emph{weight} of an assignment to a propositional formula is the number of variables evaluated to true. Of particular importance to us is the parameterized problem \textsc{Bounded-CNF-Sat} whose input is $(\mathcal{F},k)$ where $\mathcal{F}$ is a formula in CNF and whose yes-instances are those for which there is a satisfying assignment of weight $\leq k$. \textsc{Bounded-CNF-Sat} is complete for the class $\W[2]$, and its complement (modulo instances that are well-formed formulae) \textsc{PCon} is complete for the class co-$\W[2]$. Thus, \textsc{PCon} is the language of \emph{parameterized contradictions}, $(\mathcal{F},k)$ \mbox{s.t.} $\mathcal{F}$ is a CNF which has no satisfying assignment of weight $\leq k$.

A \emph{proof system} for a parameterized language $L \subseteq \Sigma^* \times \mathbb{N}$ is a poly-time computable function $P:\Sigma^* \rightarrow\Sigma^*\times \mathbb{N}$ \mbox{s.t.} $\mathrm{range}(P)=L$. $P$ is \emph{fpt-bounded} if there exists a computable function $f$ so that each $(x,k)\in L$ has a proof of size at most $f(k).|x|^{O(1)}$. 
These definitions come from \cite{Galesietal,BGL-SAT,BGLR} and are slightly different from those in \cite{FOCS2007,FOCS2007journal} (they are less unwieldy and have essentially the same properties). The program of \emph{parameterized proof complexity} is an analog of that of Cook-Reckow \cite{Proof_Complexity_start}, in which one seeks to prove results of the form $\W[2]\neq$co-$\W[2]$ by proving that parameterized proof systems are not fpt-bounded. This comes from the observation that there is an fpt-bounded parameterized proof system for a co-$\W[2]$-complete $L$ iff $\W[2]=$co-$\W[2]$.

\emph{Resolution} is a refutation system for sets of clauses (formulae in CNF) $\Sigma$. It operates on clauses by the \emph{resolution} rule, in which from $(P \vee x)$ and $(Q \vee \neg x)$ one can derive $(P \vee Q)$ ($P$ and $Q$ are disjunctions of literals), with the goal being to derive the empty clause. The only other permitted rule in weakening -- from $P$ to derive $P \vee l$ for a literal $l$. We may consider a Resolution refutation to be a DAG whose sources are labelled by initial clauses, whose unique sink is labelled by the empty clause, and whose internal nodes are labelled by derived clauses. As we are not interested in polynomial factors, we will consider the \emph{size} of a Resolution refutation to be the size of this DAG. Further, we will measure this size of the DAG in terms of the number of variables in the clauses to be resolved -- we will never consider CNFs with number of clauses superpolynomial in the number of variables. 
We define the restriction of Resolution, \emph{tree-Resolution}, in which we insist the DAG be a tree.

The system of \emph{parameterized Resolution} \cite{FOCS2007} seeks to refute the parameterized contradictions of \textsc{PCon}. Given $(\mathcal{F},k)$, where $\mathcal{F}$ is a CNF in variables $x_1,\ldots,x_n$, it does this by providing a Resolution refutation of 
\begin{equation}
\mathcal{F}\cup \{\neg x_{i_1}\vee \ldots \vee \neg x_{i_{k+1}} : 1 \leq i_1 < \ldots < i_{k+1} \leq n \}.
\label{equ:W[2]}
\end{equation}
Thus, in parameterized Resolution we have built-in access to these additional clauses of the form $\neg x_{i_1}\vee \ldots \vee \neg x_{i_{k+1}}$, but we only count those that appear in the refutation.

A \emph{$j$-clause} is an arbitrary disjunction of conjunctions of size at most $j$. Res$(j)$ is a system to refute a set of $j$-clauses. There are four derivation rules. The $\wedge$-\emph{introduction rule} allows one to derive from $P \vee \bigwedge_{i\in I_{1}}l_{i}$ and $Q \vee \bigwedge_{i\in I_{2}}l_{i}$, $P \vee Q \vee \bigwedge_{i\in I_{1} \cup I_2} l_{i}$, provided $|I_1 \cup I_2|\leq j$ ($P$ and $Q$ are $j$-clauses). The \emph{cut} (or \emph{resolution}) rule allows one to derive from $P \vee \bigvee_{i\in I}l_{i}$ and $Q \vee \bigwedge_{i\in I}\neg l_{i}$, $P \vee Q$. Finally, the two weakening rules allow the derivation of  $P \vee \bigwedge_{i\in I}l_{i}$ from $P$, provided $|I| \leq j$, and $P \vee \bigwedge_{i\in I_{1}}l_{i}$ from $P \vee \bigwedge_{i\in I_{1} \cup I_2}l_{i}$. 

If we turn a Res$(j)$ refutation of a given set of $j$-clauses $\Sigma$ upside-down, i.e. reverse the edges of the underlying graph and negate the $j$-clauses on the vertices, we get a special kind of \emph{restricted branching $j$-program}. The restrictions are
as follows.
Each vertex is labelled by a $j$-CNF which partially represents the  
information
that can be obtained along any path from the source to the vertex (this is a \emph{record} in the parlance of \cite{proofs_as_games}).
Obviously, the (only) source is labelled with the constant $\top$.
There are two kinds of queries, which can be made by a vertex:
\begin{enumerate}
\item Querying a new $j$-disjunction, and branching on the answer: that is, from $\mathcal{C}$ and the question $\bigvee_{i\in I}l_{i}?$ we split on $\mathcal{C}\wedge\bigvee_{i\in I}l_{i}$ and  $\mathcal{C}\wedge\bigwedge_{i\in I} \neg l_{i}$.
\item Querying a known $j$-disjunction, and splitting it according to  
the answer: that is, from $\mathcal{C} \wedge \bigvee_{i\in I_1 \cup I_2}l_{i}$ and the question $\bigvee_{i\in I_1}l_{i}?$ we split on $\mathcal{C} \wedge \bigvee_{i\in I_1}l_{i}$ and  $\mathcal{C} \wedge \bigvee_{i\in I_2}l_{i}$.
\end{enumerate}
\noindent There are two ways of forgetting information. From $\mathcal{C}_1 \cup \mathcal{C}_2$ we can move to $\mathcal{C}_1$. And from $\mathcal{C} \wedge \bigvee_{i\in I_1}l_{i}$ we can move to $\mathcal{C} \wedge \bigvee_{i\in I_1 \cup I_2}l_{i}$. The point is that forgetting allows us to equate the information obtained along two different branches and thus to merge them into a single new vertex. A sink of the branching $j$-program must be labelled with the negation of a $j$-clause from $\Sigma$. Thus the branching $j$-program is supposed by default to solve the \emph{Search problem for $\Sigma$}: given an assignment of the variables, find a clause which is falsified under this assignment.

The equivalence between a Res$(j)$ refutation of $\Sigma$ and a branching $j$-program of the kind above is obvious. Naturally, if we allow querying single variables only, we get branching $1$-programs -- decision DAGs -- that correspond to Resolution. If we do not
allow the forgetting of information, we will not be able to merge distinct
branches, so what we get is a class of decision trees that correspond
precisely to the tree-like version of these refutation systems. These decision DAGs permit the view of Resolution as a game between a Prover and Adversary (originally due to Pudlak in \cite{proofs_as_games}). Playing from the unique source, Prover questions variables and Adversary answers either that the variable is true or false (different plays of Adversary produce the DAG). Internal nodes are labelled by conjunctions of facts (\emph{records} to Pudlak) and the sinks hold conjunctions that contradict an initial clause. Prover may also choose to forget information at any point -- this is the reason we have a DAG and not a tree. Of course, Prover is destined to win any play of the game -- but a good Adversary strategy can force that the size of the decision DAG is large, and many Resolution lower bounds have been expounded this way. 


We may consider any refutation system as a parameterized refutation system, by the addition of the clauses given in (\ref{equ:W[2]}). In particular, parameterized Res$(j)$ -- p-Res$(j)$ -- will play a part in the sequel.

\section{Separating p-Res$^*(j)$ and p-Res$^*(j+1)$}
\label{sec:tree-res}

The \emph{Induction Principle} $\mathrm{IP}_n$ (see \cite{Rel-sep}) is given by the following clauses:
\[
\begin{array}{cl}
P_1, \neg P_n \\
\bigvee_{j>i} S_{i,j} & i \in [n-1] \\
\neg S_{i,j} \vee \neg P_{i} \vee P_j & i \in [n-1], j \in [n]
\end{array}
\]
\noindent The \emph{Relativised Induction Principle} $\mathrm{RIP}_n$ (see \cite{Rel-sep}) is similar, and is given as follows.
\[
\begin{array}{cl}
R_1, P_1, R_n, \neg P_n \\
\bigvee_{j>i} S_{i,j} & i \in [n-1] \\
\neg S_{i,j} \vee \neg R_{i} \vee \neg P_i \vee R_j & i \in [n-1], j \in [n] \\
\neg S_{i,j} \vee \neg R_{i} \vee \neg P_i \vee P_j & i \in [n-1], j \in [n] \\
\end{array}
\]
\noindent The important properties of $\mathrm{IP}_n$ and $\mathrm{RIP}_n$, from the perspective of \cite{Rel-sep}, are as follows. $\mathrm{IP}_n$ admits refutation in $\mathrm{Res}^*(1)$ in polynomial size, as does $\mathrm{RIP}_n$ in $\mathrm{Res}^*(2)$. But all refutations of $\mathrm{RIP}_n$ in $\mathrm{Res}^*(1)$ are of exponential size. In the parameterized world things are not quite so well-behaved. Both $\mathrm{IP}_n$ and $\mathrm{RIP}_n$ admit refutations of size, say, $\leq 4 k!$ in $\mathrm{p\mbox{-}Res}^*(1)$; just evaluate variables $S_{i,j}$ from $i:=n-1$ downwards. Clearly this is an fpt-bounded refutation. We are forced to consider something more elaborate, and thus we introduce the \emph{Relativised Vectorised Induction Principle} $\mathrm{RVIP}_n$.
\[
\begin{array}{cl}
R_1, P_{1,1}, R_n, \neg P_{n,j} & j \in [n] \\
\bigvee_{l>i, m\in[n]} S_{i,j,l,m} & i,j \in [n] \\
\neg S_{i,j,l,m} \vee \neg R_{i} \vee \neg P_{i,j} \vee R_l & i \in [n-1], j,l,m \in [n] \\
\neg S_{i,j,l,m} \vee \neg R_{i} \vee \neg P_{i,j} \vee P_{l,m} & i \in [n-1], j,l,m \in [n] \\
\end{array}
\]

\subsection{Lower bound: A strategy for Adversary over $\mathrm{RVIP}_n$}

We will give a strategy for Adversary in the game representation of a $\mathrm{Res}^*(1)$ refutation. For convenience, we will assume that Prover never questions the same variable twice.

Information conceded by Adversary of the form $R_i, \neg R_i, P_{i,j}$ and $S_{i,j,l,m}$ makes the element $i$ \emph{busy} ($\neg P_{i,j}$ and $\neg S_{i,j,l,m}$ do not). 
The \emph{source} is the largest element $i$ for which there is a $j$ such that Adversary has conceded $R_i \wedge P_{i,j}$. Initially, the source is $1$. Adversary always answers $R_1, P_{1,1},$ $R_n, \neg P_{n,j}$ (for  $j \in [n]$), according to the axioms. 

If $i$ is below the source. When Adversary is asked $R_i$, $P_{i,j}$ or $S_{i,j,l,m}$, then he answers $\bot$.

If $i$ is above the source. When Adversary is asked $R_i$, or $P_{i,j}$, then he gives Prover a free choice unless: 1.) $R_i$ is asked when some $P_{i,j}$ was previously answered $\top$ (in this case $R_i$ should be answered $\bot$); or 2.) Some $P_{i,j}$ is asked when $R_{i}$ was previously answered $\top$ (in this case $P_{i,j}$ should be answered $\bot$). When Adversary is asked $S_{i,j,l,m}$, then again he offers Prover a free choice. If Prover chooses $\top$ then Adversary sets $P_{i,j}$ to $\bot$.

Suppose $i$ is the source. Then Adversary answers $P_{i,j}$ and $S_{i,j,k,l}$ as $\bot$, unless $R_i \wedge P_{i,j}$ witnesses the source. If $R_i \wedge P_{i,j}$ witnesses the source, then, if $k$ is not the next non-busy element above $i$, answer $S_{i,j,l,m}$ as $\bot$. If $k$ is the next non-busy element above $i$, then give $S_{i,j,l,m}$ a free choice, unless $\neg P_{l,m}$ is already conceded by Adversary, in which case answer $\bot$. 

Using this strategy, Adversary can not be caught lying until either he has conceded that $k$ variables are true, or he has given Prover at least $n-k$ free choices.

Let $T(p,q)$ be some monotone decreasing function that bounds the size of the game tree from the point at which Prover has answered $p$ free choices $\top$ and $q$ free choices $\bot$. We can see that $T(p,q) \geq T(p+a,q) + T(p,q+a) +1$ and $T(k,n-k)\geq 0$. The following solution to this recurrence can be found in \cite{FOCS2007journal}.
\begin{corollary}
Every $\mathrm{p\mbox{-}Res}^*(1)$ refutation of $\mathrm{RVIP}_n$ is of size $\geq n^{k/16}$.
\end{corollary}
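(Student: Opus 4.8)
The plan is to read the bound straight out of the recurrence already extracted from the game, so the work splits into (i) seeing that the size of any $\mathrm{p\mbox{-}Res}^*(1)$ refutation is bounded below by $T(0,0)$, and (ii) solving the recurrence. For (i), recall that a tree-Resolution refutation is a decision tree whose size is at least the number of nodes Prover is forced to grow when Adversary plays the strategy above. I take as given the survival claim that Adversary's conceded assignment stays consistent with the clauses of $\mathrm{RVIP}_n$ and with the weight clauses of (\ref{equ:W[2]}) until either $k$ variables are set true or $n-k$ free choices have been made; hence no leaf (a node falsifying an initial or weight clause) can be reached before that threshold. Letting $T(p,q)$ denote the least number of nodes any refutation must still contain once Prover has received $p$ free choices answered $\top$ and $q$ answered $\bot$, the whole refutation therefore has size at least $T(0,0)$.

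For (ii), I would first observe that forced answers contribute a single child and may be discarded for a lower bound, whereas each free choice forces a genuine binary branch; this is exactly the origin of $T(p,q) \ge T(p+a,q) + T(p,q+a) + 1$, with the trivial boundary $T(k,n-k) \ge 0$. I would solve this by downward induction over the admissible lattice region, showing that $T(p,q)+1$ is at least the number of monotone lattice paths from $(p,q)$ to the exit boundary $\{p \ge k\} \cup \{p+q \ge n-k\}$, each step advancing one coordinate by $a$. Summing the $+1$ contributions over this recursion tree yields a binomial coefficient: after the rescaling $p = ax,\ q = ay$ that turns the recurrence into $U(x,y) \ge U(x+1,y) + U(x,y+1) + 1$, one gets a bound of the form $T(0,0) \ge \binom{(n-k)/a}{k/a} - 1$.

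It then remains to estimate this quantity. Using $\binom{N}{M} \ge (N/M)^M$ gives $T(0,0) \ge ((n-k)/k)^{k/a} - 1$, and in the operative range of $k$ (where $k$ is small relative to $n$) this is at least $n^{k/16}$; the constant $16$ is where the small integer $a$ produced by the game and the loss in passing from $((n-k)/k)^{k/a}$ to a clean power of $n$ are absorbed.

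The real obstacle is not this arithmetic but justifying the inputs to the recurrence, namely the survival claim and the exact value of $a$. One must check clause by clause that, under the prescribed answers — the roles of the \emph{busy} elements and of the \emph{source}, the coupling whereby answering some $S_{i,j,l,m}$ with $\top$ forces $P_{i,j}=\bot$, and the forced answers below and at the source — no axiom of $\mathrm{RVIP}_n$ and no weight clause is falsified before $k$ trues or $n-k$ free choices occur. Pinning down that a single free query advances $(p,q)$ by at most $a$ in one coordinate, and keeping $a$ as small as possible, is precisely what fixes the final exponent, so I would spend the bulk of the effort verifying the strategy's consistency rather than on the recurrence itself.
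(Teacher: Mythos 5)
Your proposal takes essentially the same route as the paper: the paper also reduces the corollary to the Adversary strategy's survival claim (``not caught lying until $k$ variables are conceded true or $n-k$ free choices are given'') together with the recurrence $T(p,q) \ge T(p+a,q)+T(p,q+a)+1$, $T(k,n-k)\ge 0$, and the lattice-path/binomial solution you sketch, giving roughly $\binom{(n-k)/a}{k/a}$ and hence $n^{k/16}$ in the relevant range of $k$, is exactly the solution the paper imports by citation from the journal version of Dantchev--Martin--Szeider rather than re-deriving. Your decision to take the survival claim as given and to flag the value of $a$ as the real work matches the paper's own level of detail, since the paper asserts that claim directly from the stated strategy without a clause-by-clause verification.
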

We may increase the number of relativising predicates to define $\mathrm{RVIP}^r_n$ (note $\mathrm{RVIP}^1_n=\mathrm{RVIP}_n$).
\[
\begin{array}{cl}
R^1_1,\ldots,R^1_1, P_{1,1}, R^1_n,\ldots,R^r_n \neg P_{n,j} & j \in [n] \\
\bigvee_{l>i, m\in[n]} S_{i,j,l,m} & i,j \in [n] \\
\neg S_{i,j,l,m} \vee \neg R^1_{i} \vee \ldots \vee \neg R^r_{i} \vee \neg P_{i,j} \vee R^1_l & i \in [n-1], j,l,m \in [n] \\
\vdots \\
\neg S_{i,j,l,m} \vee \neg R^1_{i} \vee \ldots \vee \neg R^r_{i} \vee \neg P_{i,j} \vee R^r_l & i \in [n-1], j,l,m \in [n] \\
\neg S_{i,j,l,m} \vee \neg R^1_{i} \vee \ldots \vee \neg R^r_{i} \vee \neg P_{i,j} \vee P_{l,m} & i \in [n-1], j,l,m \in [n] \\
\end{array}
\]
\noindent We show how to adapt the previous argument in order to demonstrate the following.
\begin{corollary}
Every $\mathrm{p\mbox{-}Res}^*(j+1)$ refutation of $\mathrm{RVIP}{j}$ is of size $\geq n^{k/16}$.
\end{corollary}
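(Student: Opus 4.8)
The plan is to lift the three-zone Adversary strategy of the previous subsection from the single relativising predicate $R$ of $\mathrm{RVIP}_{n}=\mathrm{RVIP}^{1}_{n}$ to the $j$ predicates $R^{1},\ldots,R^{j}$ of $\mathrm{RVIP}^{j}_{n}$, now played against a branching $j$-program, i.e.\ the game form of a $\mathrm{Res}^{*}(j)$ refutation. Everything turns on one structural fact: an element $i$ becomes a \emph{source} only once Adversary has conceded the width-$(j{+}1)$ witness $R^{1}_{i}\wedge\cdots\wedge R^{j}_{i}\wedge P_{i,c}$ for some colour $c$, whereas a vertex of a branching $j$-program may query only a disjunction of at most $j$ literals. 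Hence no single $\mathrm{Res}^{*}(j)$ query can pose the question ``is $i$ a full witness?'', which is the width-$(j{+}1)$ disjunction $\neg R^{1}_{i}\vee\cdots\vee\neg R^{j}_{i}\vee\neg P_{i,c}$; this is exactly the gap between query width $j$ and witness width $j{+}1$ that, for $j=1$, stopped a $\mathrm{Res}^{*}(1)$ program from resolving the width-$2$ witness $R_{i}\wedge P_{i,c}$ at a stroke. Because that same witness-disjunction \emph{is} a legal query in $\mathrm{Res}^{*}(j+1)$, the principle there collapses to an fpt-bounded refutation that merely tracks the current witness, so the bound below separates $\mathrm{p\mbox{-}Res}^{*}(j)$ from $\mathrm{p\mbox{-}Res}^{*}(j+1)$.

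First I would transcribe the bookkeeping. Element $i$ is \emph{busy} once Adversary concedes some $R^{t}_{i}$ positively, some $P_{i,c}$, or some $S_{i,c,l,m}$; the answers $\neg P_{i,c}$, $\neg S_{i,c,l,m}$ and a single $\neg R^{t}_{i}$ leave it free. The source is the largest $i$ carrying a full witness, initially $1$ by the axioms $R^{1}_{1},\ldots,R^{j}_{1},P_{1,1}$, and Adversary always honours the axioms at $1$ and at $n$. Below the source he answers $\bot$ to every $R^{t}_{i}$, $P_{i,c}$ and $S_{i,c,l,m}$. Above the source he offers a free choice on each $R^{t}_{i}$ and each $P_{i,c}$, subject to the two constraints that stop a witness forming for nothing: if some $P_{i,c}$ already stands at $\top$ the remaining $R^{t}_{i}$ are answered $\bot$, and once all of $R^{1}_{i},\ldots,R^{j}_{i}$ stand at $\top$ the next $P_{i,c}$ is answered $\bot$; on $S_{i,c,l,m}$ he offers a free choice, setting $P_{i,c}:=\bot$ on a $\top$. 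At the source the induction axioms are followed exactly as when $j=1$: $S_{i,c,l,m}$ is $\bot$ unless the witness is present and $l$ is the next non-busy element above $i$, when it is free unless $\neg P_{l,m}$ is already conceded, and a $\top$ advances the source to $l$, the induction axioms forcing $R^{1}_{l},\ldots,R^{j}_{l},P_{l,m}$ true.

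The genuinely new ingredient is that in $\mathrm{Res}^{*}(j)$ the Prover queries a $j$-disjunction rather than a single literal, so the rules above must be applied to a whole query: a $\bot$ answer commits all $\leq j$ of its literals to false at once, while a $\top$ answer concedes only the disjunction. The key observation is that a query mentions at most $j$ of the $j{+}1$ witness variables of any element, so it cannot express the full witness-negation; Adversary therefore answers any disjunction of witness-literal negations with $\top$, truthfully maintaining that $i$ is not yet a witness while conceding no true answer, and no $j$-disjunction can fix the last missing witness variable. I would then argue that at the instant an element is promoted to the source at least one of its witness variables was turned $\top$ by a genuine free choice, or the promotion rests on a true answer already conceded at the source; either way each promotion costs the Prover a free choice or a true answer, just as when $j=1$. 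Making this charging uniform in $j$ -- so that a $\bot$-answered disjunction, which may falsify several literals at no cost, can never itself complete a witness, and so that the constant in the exponent does not drift with $j$ -- is the main obstacle; the clean route is to charge each promotion to the last witness variable set $\top$ and to verify, case by case on which axiom group is at risk, that the strategy stays globally consistent.

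With the strategy in place I would re-establish the invariant that Adversary falsifies no clause of $\mathrm{RVIP}^{j}_{n}$ until he has conceded $k$ true answers or granted $n-k$ free choices. The only clauses needing fresh attention are the two induction groups, which now carry the $j$ literals $\neg R^{1}_{i},\ldots,\neg R^{j}_{i}$; one checks that each can be falsified only at the current source with its full witness present, forcing the move to the next non-busy element and consuming a true answer or a free choice as before. The free-choice count then obeys the same recurrence $T(p,q)\geq T(p+a,q)+T(p,q+a)+1$ with $T(k,n-k)\geq 0$, whose solution quoted from \cite{FOCS2007journal} gives size $\geq n^{k/16}$; together with the $\mathrm{Res}^{*}(j+1)$ upper bound this yields the claimed separation.
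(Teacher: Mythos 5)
Your proposal follows the paper's route: lift the $j=1$ Adversary strategy to a branching $j$-program over $\mathrm{RVIP}^j_n$ (your reading of the misprinted statement -- queries of width $j$ against witnesses $R^1_i\wedge\cdots\wedge R^j_i\wedge P_{i,c}$ of width $j+1$, paired with the $\mathrm{Res}^*(j+1)$ upper bound -- is the intended one), answer a queried $j$-disjunction as forced or free according to the disjunction of the single-literal answers, and feed the same recurrence $T(p,q)\geq T(p+a,q)+T(p,q+a)+1$, $T(k,n-k)\geq 0$ to get $n^{k/16}$. However, exactly at the point you flag as ``the main obstacle'' the paper has a concrete rule that your sketch is missing: whenever a \emph{positive} disjunction over any subset of the witness variables $R^1_i,\ldots,R^j_i,P_{i,\cdot}$ is answered $\top$, Adversary immediately sets \emph{all remaining unquestioned} variables of that form for $i$ to $\bot$. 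Your per-unit rules (block the $R^t_i$ once some $P_{i,c}$ stands at $\top$; block $P_{i,c}$ once all $R^t_i$ stand at $\top$) fire only on unit concessions, so a Prover who concedes-by-disjunction -- e.g.\ takes $R^1_i,\ldots,R^{j-1}_i$ by unit free choices and then queries $R^j_i\vee P_{i,c}$ -- leaves you improvising later, when the conceded disjunction is split by a known-disjunction query; that is precisely the global-consistency case analysis you defer rather than carry out. The paper's kill rule, combined with the width count you do state (no width-$j$ query mentions all $j+1$ witness variables, so some witness variable is always left to be set $\bot$), closes this in one stroke: no element above the source can ever acquire a full unit witness, and each promotion of the source is charged exactly as in the $j=1$ case.

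There is also a concrete bookkeeping error: you declare that a conceded $\neg R^t_i$ leaves $i$ non-busy, whereas in the paper's strategy $\neg R_i$ \emph{does} make $i$ busy (only $\neg P_{i,j}$ and $\neg S_{i,j,l,m}$ are harmless), and this must persist for each $\neg R^t_i$ in the lifted game. Under your definition an element $l$ with a previously conceded $\neg R^t_l$ can be the ``next non-busy element above the source''; a free choice answered $\top$ on $S_{i,c,l,m}$ then forces Adversary, through the axiom $\neg S_{i,c,l,m}\vee\neg R^1_i\vee\cdots\vee\neg R^j_i\vee\neg P_{i,c}\vee R^t_l$, into an immediate contradiction with his own record -- and your escape clause checks only for a previously conceded $\neg P_{l,m}$, not for $\neg R^t_l$. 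So as written the Adversary can be caught lying with fewer than $k$ true concessions and fewer than $n-k$ free choices, breaking the invariant on which the recurrence rests. The fix is easy (restore $\neg R^t_i\Rightarrow$ busy, or add conceded $\neg R^t_l$ to the reasons for answering $S_{i,c,l,m}$ with $\bot$), but together with the missing kill rule it means the strategy you describe is not yet a correct proof, even though the architecture and the counting are the paper's.
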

\noindent We use essentially the same Adversary strategy in a branching $j$-program. We answer questions $l_1 \vee \ldots \vee l_j$ as either forced or free exactly according to the disjunction of how we would have answered the corresponding $l_i$s, $i \in [j]$, before (if one $l_i$ is free, then the disjunction is also free).
The key point is that once some positive disjunction involving some subset of $R^1_i,\ldots,R^r_i$ or $P_{i,j}$ (never all of these together, of course), is questioned then, on a positive answer to this, the remaining unquestioned variables of this form should be set to $\bot$.

\subsection{Upper bound: a $\mathrm{Res}^*(j+1)$ refutation of $\mathrm{RVIP}^j_n$}

Look at the simpler, but very similar, refutation of $\mathrm{RIP}_n$ in $\mathrm{Res}^*(2)$, of size $O(n^2)$, as appears in Figure~\ref{fig:1}.
\begin{figure}
\[
\xymatrix{
\neg R_n \vee \neg P_n ? \ar[d]_{\top} \ar[r]^{\bot} & \# & \\
\neg R_{n-1} \vee \neg P_{n-1} ? \ar[d]_{\top} \ar[r]^{\bot} & S_{n-1,n} ? \ar[d]_{\top} \ar[r]^{\bot} & \# \\
\vdots \ar[d]_\top & \# & \\
\neg R_1 \vee \neg P_1 ? \ar[d]_{\top} \ar[r]^{\bot} & S_{1,n} ? \ar[d]_{\top} \ar[r]^{\bot} & \cdots \ar[r]^{\bot} & S_{1,2} ? \ar[d]_{\top} \ar[r]^{\bot} & \# \\
\# & \# & & \# \\
}
\]
\label{fig:1}
\caption{Refutation of $\mathrm{RIP}_n$ in $\mathrm{Res}^*(2)$}
\end{figure}
\begin{proposition}
There is a refutation of $\mathrm{RVIP}^j_n$ in $\mathrm{Res}^*(j+1)$, of size $O(n^{j+3})$.
\end{proposition}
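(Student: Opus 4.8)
The plan is to generalise the $\mathrm{Res}^*(2)$ refutation of $\mathrm{RIP}_n$ in Figure~\ref{fig:1}, presenting the refutation as a branching $(j+1)$-program (equivalently, a Prover decision tree, since tree-likeness forbids forgetting) that solves the search problem for $\mathrm{RVIP}^j_n$. Exactly as in that figure, the refutation consists of a \emph{spine} that locates the \emph{source} -- the largest element $i$ that is \emph{active}, meaning $R^1_i\wedge\cdots\wedge R^j_i$ holds together with $P_{i,m}$ for some $m$ -- followed by a phase that, from the source, exhibits a strictly larger active element and thereby a contradiction. The one genuinely new feature compared with $\mathrm{RIP}_n$ is the vectorisation of $P$ and $S$ together with the $j$ relativising predicates; it is the latter that forces conjunctions of width $j+1$, hence the need for $\mathrm{Res}^*(j+1)$.

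First I would build the spine. Scanning $i$ downwards from $n$, and for each $i$ scanning $m$ through $[n]$, the Prover queries the width-$(j+1)$ clause $D_{i,m}:=\neg R^1_i\vee\cdots\vee\neg R^j_i\vee\neg P_{i,m}$. A positive answer is recorded and the Prover advances to the next pair; a negative answer yields the $(j+1)$-conjunction $R^1_i\wedge\cdots\wedge R^j_i\wedge P_{i,m}$, certifying that $i$ is active with witness $m$, and the Prover branches off to the source-processing phase for $(i,m)$. Because the axioms supply $R^1_1,\ldots,R^j_1,P_{1,1}$, the spine always finds a source (at worst $i=1$); if the whole spine is traversed without one, then the recorded clause $D_{1,1}$ can be split, via the known-disjunction rule, into its $j+1$ literals, each of which falsifies one of the unit axioms $R^s_1$ or $P_{1,1}$.

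At a source $(i^*,m_0)$ with $i^*<n$ (the case $i^*=n$ is immediate, since the recorded $P_{n,m_0}$ falsifies the unit axiom $\neg P_{n,m_0}$), the Prover cannot query the wide disjunction axiom $\bigvee_{l>i^*,m}S_{i^*,m_0,l,m}$ in one step, as its negation is a conjunction far wider than $j+1$; instead it queries the single variables $S_{i^*,m_0,l,m}$ one at a time over all $l>i^*$ and $m\in[n]$. Exhausting them with negative answers falsifies that disjunction axiom, while a positive answer fixes a successor $(l^*,m_1)$ and records $S_{i^*,m_0,l^*,m_1}$. The crucial point, inherited from $\mathrm{RIP}_n$, is that since $l^*>i^*$ the spine has already recorded $D_{l^*,m_1}$ as true; its $j+1$ literals are precisely the negations of the $j+1$ positive heads $R^1_{l^*},\ldots,R^j_{l^*},P_{l^*,m_1}$ occurring across the two families of inductive axioms for $S_{i^*,m_0,l^*,m_1}$. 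Splitting $D_{l^*,m_1}$ into these literals and combining each with the recorded conjunction $R^1_{i^*}\wedge\cdots\wedge R^j_{i^*}\wedge P_{i^*,m_0}\wedge S_{i^*,m_0,l^*,m_1}$ falsifies, in turn, each of these $j+1$ inductive axioms.

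Finally I would count nodes. The spine is a single path of length $O(n^2)$ ($n$ elements times $n$ properties), and off each of its nodes hangs one source-processing subtree, itself a path of length $O(n^2)$ over the $S$-variables with only $O(j)$ extra nodes at each contradiction; multiplying gives total size $O(n^4)$, which in particular realises the claimed $O(n^{j+3})$ bound. I expect the main obstacle to be verifying correctness of the contradiction step rather than controlling size: one must check that \emph{every} leaf is labelled by the negation of a genuine axiom of $\mathrm{RVIP}^j_n$, which rests entirely on the exact alignment between the $j+1$ literals of the recorded clause $D_{l^*,m_1}$ and the $j+1$ axiom heads, and on the observation that the existential ``some $m$'' in the definition of active cannot be collapsed into a single query (a wide disjunction over the $P_{i,m}$ would produce a conjunction exceeding width $j+1$), which is exactly what fixes the per-element linear scan and hence the polynomial size.
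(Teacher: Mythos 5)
Your proposal is correct and is essentially the paper's own proof: the paper simply exhibits the branching program for $j=1$ (a downward spine querying the $(j{+}1)$-disjunctions $\neg R^1_i \vee \cdots \vee \neg R^j_i \vee \neg P_{i,m}$, with an $S$-scan hanging off each negative answer) and declares the generalisation clear. In fact you spell out details the paper leaves implicit -- the splitting of the recorded clause $D_{l^*,m_1}$ into the $j{+}1$ literals matching the axiom heads, and the $O(n^4)\subseteq O(n^{j+3})$ count -- so your write-up is a faithful, more explicit version of the same argument.
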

\begin{proof}
We give the branching program for $j:=1$ -- the generalisation is clear.
\[
\xymatrix{
\neg R_n \vee \neg P_{n,n} ? \ar[d]_{\top} \ar[r]^{\bot} & \# & \\
\vdots \ar[d]_{\top} \\
\neg R_n \vee \neg P_{n,1} ? \ar[d]_{\top} \ar[r]^{\bot} & \# & \\
\neg R_{n-1} \vee \neg P_{n-1,n} ? \ar[d]_{\top} \ar[r]^{\bot} & S_{n-1,n} ? \ar[d]_{\top} \ar[r]^{\bot} & \# \\
\vdots \ar[d]_\top & \# & \\
\neg R_{n-1} \vee \neg P_{n-1,1} ? \ar[d]_{\top} \ar[r]^{\bot} & S_{n-1,n} ? \ar[d]_{\top} \ar[r]^{\bot} & \# \\
\vdots \ar[d]_{\top} & \#\\
\neg R_1 \vee \neg P_{1,n} ? \ar[d]_{\top} \ar[r]^{\bot} & S_{1,n} ? \ar[d]_{\top} \ar[r]^{\bot} & \cdots \ar[r]^{\bot} & S_{1,2} ? \ar[d]_{\top} \ar[r]^{\bot} & \# \\
\vdots \ar[d]_\top & \# & & \# \\
\neg R_1 \vee \neg P_{1,1} ? \ar[d]_{\top} \ar[r]^{\bot} & S_{1,n} ? \ar[d]_{\top} \ar[r]^{\bot} & \cdots \ar[r]^{\bot} & S_{1,2} ? \ar[d]_{\top} \ar[r]^{\bot} & \# \\
\# & \# & & \# \\
}
\]
\end{proof}

\section{Separating p-Res$(1)$ and p-Res$(2)$}
\label{sec:res}

The \emph{Relativized Least Number Principle} $\mathrm{RLNP}_n$ is given by the following clauses:
\[
\begin{array}{cl}
\neg R_i \vee \neg L_{i,i} & i \in [n] \\
\neg R_i \vee \neg R_j \vee \neg R_k \vee \neg L_{i,j} \vee \neg L_{j,k} \vee L_{i,k} & i,j,k \in [n] \\  
\bigvee_{i \in [n]} S_{i,j} & j \in [n] \\
\neg S_{i,j} \vee \neg R_j \vee R_i & i,j \in [n] \\
\neg S_{i,j} \vee \neg R_j \vee \neg L_{i,j} & i, j \in [n] \\
R_n 
\end{array}
\]
The salient properties of $\mathrm{RLNP}_n$ are that it is polynomial to refute in $\mathrm{Res}(2)$, but exponential in $\mathrm{Res}(1)$ (see \cite{Rel-sep}). Polynomiality clearly transfers to fpt-boundedness in $\mathrm{p\mbox{-}Res}(2)$, so we address the lower bound for $\mathrm{p\mbox{-}Res}(1)$.

\subsection{Lower bound: A strategy for Adversary over $\mathrm{RLNP}_n$}

We will give a strategy for Adversary in the game representation of a $\mathrm{p\mbox{-}Res}(1)$ refutation. The argument used in \cite{Rel-sep} does not adapt to the parameterized case, so we instead use a technique developed for the Pigeonhole principle by Razborov in \cite{BGLR}.

Recall that a \emph{parameterized clause} is of the form $\neg v_1 \vee \ldots \vee \neg v_{k+1}$ (where each $v_i$ is some $R$ ,$L$ or $S$ variable). The $i,j$ appearing in $R_i$, $L_{i,j}$ and $S_{i,j}$ are termed \emph{co-ordinates}. We define the following \emph{random restrictions}. Set $R_n:=\top$. Randomly choose $i_0 \in [n-1]$ and set $R_{i_0}:=\top$ and $L_{i_0,n}=S_{i_0,n}:=\top$. Randomly choose $n-\sqrt{n}$ elements from $[n-1]\setminus {i_0}$, and call this set $\mathcal{C}$. Set $R_i := \bot$ for $i \in \mathcal{C}$. Pick a random bijection $\pi$ on $\mathcal{C}$ and set $L_{i,j}$ and $S_{i,j}$, for $i,j \in \mathcal{C}$, according to whether $\pi(j)=i$. Set $L_{i,j}=L_{j,i}:=\bot$, if $j \in \mathcal{C}$ and $i \in [n] \setminus (\mathcal{C} \cup \{i_0\})$.

What is the probability that a parameterized clause is \textbf{not} evaluated to true by the random assignment? We allow that each of $\neg R_n$, $\neg R_{i,0}$, $\neg L_{i_o,n}$ and $\neg S_{i_0,n}$ appear in the clause -- leaving $k+1-4=k-3$ literals, within must appear $\sqrt{(k-3)/4}$ distinct co-ordinates. The probability that some $\neg R_i$ is not true is $\leq \frac{\sqrt{n}}{n-\sqrt{n}}\leq \frac{2}{\sqrt{n}}$. The probability that some $\neg L_{i,j}$ is not true, where one of the co-ordinates $i,j$ is possibly mentioned before, is $\leq \frac{1}{\sqrt{n}} \frac{1}{n-\sqrt{n}} \cdot \frac{n-\sqrt{n}}{n} \leq \frac{2}{\sqrt{n}}$. Likewise with $\neg S_{i,j}$.  Thus we get that the probability that a parameterized clause is \textbf{not} evaluated to true by the random assignment is $\leq \frac{2}{\sqrt{n}}^{\sqrt{(k-3)/4}} \leq (n/4)^{-\sqrt{k-3}} \leq n^{-\sqrt{k/4}}$.

Now we are ready to complete the proof. Suppose fewer than $n^{\sqrt{k/4}}$ parameterized clauses appear in a $\mathrm{p\mbox{-}Res}(1)$ refutation of $\mathrm{RLNP}_n$, then there is a random restriction as per the previous paragraph that evaluates all of these clauses to true. What remains is a $\mathrm{Res}(1)$ refutation of $\mathrm{RLNP}_{\sqrt{n}}$, which must be of size larger than $n^{\sqrt{k/4}}$ itself, for $n$ sufficiently large (see \cite{Rel-sep}). Thus we have proved.
\begin{theorem}
Every $\mathrm{p\mbox{-}Res}(1)$ refutation of $\mathrm{RLNP}_n$ is of size $\geq n^{\sqrt{k/4}}$.
\end{theorem}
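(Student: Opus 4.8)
The plan is to argue by contradiction, combining a random-restriction argument of the kind Razborov uses for the Pigeonhole principle (as in \cite{BGLR}) with the known non-parameterized lower bound of \cite{Rel-sep}. Suppose some $\mathrm{p\mbox{-}Res}(1)$ refutation $\Pi$ of $\mathrm{RLNP}_n$ had size strictly less than $n^{\sqrt{k/4}}$. In particular $\Pi$ would then contain fewer than $n^{\sqrt{k/4}}$ of the parameterized clauses $\neg v_1 \vee \ldots \vee \neg v_{k+1}$. I would apply the random restriction $\rho$ described above: set $R_n := \top$, pick $i_0 \in [n-1]$ and set $R_{i_0}, L_{i_0,n}, S_{i_0,n} := \top$, pick the set $\mathcal{C}$ of $n-\sqrt{n}$ elements together with a random bijection $\pi$ on it, and assign the $R$, $L$, $S$ variables accordingly. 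The restriction is engineered to do two jobs at once: it should satisfy each parameterized clause with overwhelming probability, and it should specialise the remaining axioms of $\mathrm{RLNP}_n$, after renaming the roughly $\sqrt{n}$ surviving elements, to precisely an instance of $\mathrm{RLNP}_{\sqrt{n}}$, with $n$ as the maximal element and $i_0$ sitting just below it.

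The crux, and the step I expect to be the main obstacle, is the per-clause survival estimate. A parameterized clause fails to be set to $\top$ by $\rho$ exactly when none of its negated variables is set to $\bot$. At most four of its $k+1$ literals can lie among $\neg R_n, \neg R_{i_0}, \neg L_{i_0,n}, \neg S_{i_0,n}$, whose variables $\rho$ forces to $\top$; so at least $k-3$ literals remain, and a counting argument on co-ordinates shows these must involve at least $\sqrt{(k-3)/4}$ distinct co-ordinates (an $R$-literal carries one co-ordinate, an $L$- or $S$-literal two, so fewer co-ordinates could not supply $k-3$ distinct literals). Each single literal of the forms $\neg R_i$, $\neg L_{i,j}$, $\neg S_{i,j}$ fails to be made true by $\rho$ with probability at most $\frac{2}{\sqrt{n}}$, as computed above. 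The delicate point is independence: the relevant events are driven by the \emph{same} random choice of $\mathcal{C}$ and the \emph{same} bijection $\pi$, so they are not literally independent. I would therefore extract from the surviving literals a subfamily supported on pairwise-disjoint co-ordinates, of size at least $\sqrt{(k-3)/4}$, and argue that the corresponding ``not made true'' events are negatively correlated, so that their joint probability is bounded by the product $\left(\frac{2}{\sqrt{n}}\right)^{\sqrt{(k-3)/4}} \leq n^{-\sqrt{k/4}}$. Making this extraction and the correlation control rigorous is where the real work lies.

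With the per-clause bound $n^{-\sqrt{k/4}}$ in hand, a union bound closes the argument: the expected number of parameterized clauses of $\Pi$ surviving $\rho$ is at most (number of such clauses) $\times\, n^{-\sqrt{k/4}} < 1$, so some outcome of $\rho$ satisfies every parameterized clause in $\Pi$. Restricting $\Pi$ by this $\rho$ deletes all the parameterized clauses and leaves a genuine $\mathrm{Res}(1)$ refutation of $\mathrm{RLNP}_{\sqrt{n}}$ whose size is no larger than that of $\Pi$, hence still below $n^{\sqrt{k/4}}$. This contradicts the lower bound of \cite{Rel-sep}, by which every $\mathrm{Res}(1)$ refutation of $\mathrm{RLNP}_{\sqrt{n}}$ has size exponential in $\sqrt{n}$ and so exceeds $n^{\sqrt{k/4}}$ for all sufficiently large $n$. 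The contradiction forces every $\mathrm{p\mbox{-}Res}(1)$ refutation of $\mathrm{RLNP}_n$ to have size at least $n^{\sqrt{k/4}}$. The one remaining thing to verify — routine but needed — is that $\mathrm{RLNP}_n$ restricted by $\rho$ really is $\mathrm{RLNP}_{\sqrt{n}}$ up to renaming of its $\sqrt{n}$ live elements, i.e. that each defining axiom specialises correctly with $n$ as top element; this is design rather than difficulty.
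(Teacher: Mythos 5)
Your overall architecture is exactly that of the paper's proof: the same random restriction, the same accounting that at most four literals of a parameterized clause (namely $\neg R_n$, $\neg R_{i_0}$, $\neg L_{i_0,n}$, $\neg S_{i_0,n}$) are handled separately, leaving $k-3$ literals spanning at least $\sqrt{(k-3)/4}$ distinct co-ordinates, the same per-literal failure bound $2/\sqrt{n}$, the same union bound over fewer than $n^{\sqrt{k/4}}$ parameterized clauses, and the same endgame in which the restricted refutation becomes a $\mathrm{Res}(1)$ refutation of $\mathrm{RLNP}_{\sqrt{n}}$, contradicting the exponential lower bound of \cite{Rel-sep} for $n$ large. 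However, the one step where you depart from the paper --- the step you yourself flag as ``where the real work lies'' --- contains a genuine flaw. You propose to extract from the $k-3$ surviving literals a subfamily of size at least $\sqrt{(k-3)/4}$ supported on \emph{pairwise-disjoint} co-ordinates, and then to argue negative correlation. No such extraction is possible in general: take the literals $\neg L_{i,j_1}, \neg L_{i,j_2}, \ldots, \neg L_{i,j_m}$, all sharing the single co-ordinate $i$. These are $m$ distinct literals on $m+1$ distinct co-ordinates, yet any two of them intersect, so the largest subfamily with pairwise-disjoint support has size $1$ however large $k$ is. Having many distinct co-ordinates guarantees nothing about the matching number of the support graph, so your product bound cannot be reached along this route.

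The paper avoids disjointness altogether, and this is the missing idea: order the literals so that each contributes at least one co-ordinate not occurring earlier (which the count of $\sqrt{(k-3)/4}$ distinct co-ordinates does guarantee), and bound the \emph{conditional} probability that the next literal fails to be satisfied, given all earlier outcomes, by $2/\sqrt{n}$. This is exactly what the paper's phrase ``where one of the co-ordinates $i,j$ is possibly mentioned before'' is doing for $\neg L_{i,j}$ and $\neg S_{i,j}$: even conditioned on the status of the old co-ordinate, the fresh co-ordinate lands outside $\mathcal{C}$ with probability roughly $1/\sqrt{n}$, and if it lands inside $\mathcal{C}$ it is matched appropriately by $\pi$ with probability roughly $1/n$; the star example above is harmless under this analysis, since each $j_t$ is a fresh co-ordinate. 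The chain rule then multiplies these conditional bounds to give $(2/\sqrt{n})^{\sqrt{(k-3)/4}} \leq n^{-\sqrt{k/4}}$ with no correlation inequality and no extraction of disjoint supports. With this repair --- replacing your matching-plus-negative-correlation plan by sequential conditioning on fresh co-ordinates --- your argument coincides with the paper's; the remaining steps (union bound, checking that the restricted axioms are $\mathrm{RLNP}_{\sqrt{n}}$ up to renaming, and the contradiction with \cite{Rel-sep}) you state correctly and they match the paper.
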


\section{Concluding remarks}

It is most natural when looking for separators of $\mathrm{p\mbox{-}Res}^*(1)$ and $\mathrm{p\mbox{-}Res}^*(2)$ to look for CNFs, like $\mathrm{RVIP}_n$ that we have given. $\mathrm{p\mbox{-}Res}^*(2)$ is naturally able to process $2$-clauses and we may consider $\mathrm{p\mbox{-}Res}^*(1)$ acting on $2$-clauses, when we think of it using any of the clauses obtained from those $2$-clauses by distributivity. In this manner, we offer the following principle as being fpt-bounded for $\mathrm{p\mbox{-}Res}^*(2)$ but not fpt-bounded for $\mathrm{p\mbox{-}Res}^*(1)$. Consider the two axioms $\forall x (\exists y \neg S(x,y) \wedge T(x,y)) \vee P(x)$ and $\forall x,y T(x,y) \rightarrow S(x,y)$. This generates the following system $\Sigma_{PST}$ of $2$-clauses.
\[
\begin{array}{cl}
P_i \vee \bigvee_{j \in [n]} (\neg S_{i,j} \wedge T_{i,j}) & i \in [n] \\
\neg T_{i,j} \vee S_{i,j} & i,j \in [n]
\end{array}
\]
\noindent Note that the expansion of $\Sigma_{PST}$ to CNF makes it exponentially larger. It is not hard to see that $\Sigma_{PST}$ has refutations in $\mathrm{p\mbox{-}Res}^*(2)$ of size $O(kn)$, while any refutation in $\mathrm{p\mbox{-}Res}^*(1)$ will be of size $\geq n^{k/2}$.

All of our upper bounds, \mbox{i.e.} for both $\mathrm{RVIP}_n$ and $\mathrm{RLNP}_n$, are in fact polynomial, and do not depend on $k$. That is, they are rather more than fpt-bounded. If we want examples that depend also on $k$ then we may enforce this easily enough, as follows. For a set of clauses $\Sigma$, build a set of clauses $\Sigma'_k$ with new propositional variables $A$ and $B_1,B'_1,\ldots,B_{k+1},B'_{k+1}$. From each clause $\mathcal{C} \in \Sigma$, generate the clause $A \vee \mathcal{C}$ in $\Sigma'_k$. Finally, augment $\Sigma'_k$ with the following clauses: $\neg A \vee B_1 \vee B'_1$, \ldots, $\neg A \vee B_{k+1} \vee B'_{k+1}$. If $\Sigma$ admits refutation of size $\Theta(n^c)$ in $\mathrm{p\mbox{-}Res}^*(j)$ then $(\Sigma'_k,k)$ admits refutation of size $\Theta(n^c+2^{k+1})$. The parameterized contradictions so obtained are no longer ``strong'', but we could even enforce this by augmenting instead a Pigeonhole principle from $k+1$ to $k$.

It is hard to prove p-Res$(1)$ lower bounds for parameterized
$k$-clique on a random graph \cite{BGL-SAT}, but we now introduce a contradiction that looks similar but for which lower bounds should be easier.
It is a variant of the Pigeonhole principle which could give us another very natural separation of $\mathrm{p\mbox{-}Res}(1)$ from $\mathrm{p\mbox{-}Res}(2)$. Define the contradiction PHP$_{k+1,n,k}$, on variables $p_{i,j}$ ($i \in [k+1]$ and $j \in [n]$) and $q_{i,j}$ ($i \in [n]$ and $j \in [k]$), and with clauses:
\[
\begin{array}{ll}
\neg p_{i,j} \vee \neg p_{l,j} & i \neq l \in [k+1]; j \in [n] \\
\neg q_{i,j} \vee \neg q_{l,j} & i \neq l \in [n]; j \in [k] \\
\bigvee_{\lambda \in [n]} p_{i,\lambda} & i \in [k] \\
\neg p_{i,j} \vee \bigvee_{\lambda \in [k]} q_{j,\lambda} & j \in [n] \\
\end{array}
\]
\noindent We conjecture that this principle, which admits fpt-bounded refutation in $\mathrm{p\mbox{-}Res}(2)$, does not in $\mathrm{p\mbox{-}Res}(1)$.

Finally, we leave open the technical question as to whether suitably defined, further-relativised versions of RLNP$_n$ can separate $\mathrm{p\mbox{-}Res}(j)$ from $\mathrm{p\mbox{-}Res}(j+1)$. We conjecture that they can.

\end{document}